\documentclass[conference]{IEEEtran}
\IEEEoverridecommandlockouts
\usepackage{cite}
\usepackage{amsmath,amssymb,amsfonts}
\usepackage{graphicx}
\usepackage{textcomp}
\usepackage{xcolor}
\def\BibTeX{{\rm B\kern-.05em{\sc i\kern-.025em b}\kern-.08em
    T\kern-.1667em\lower.7ex\hbox{E}\kern-.125emX}}

\usepackage{enumitem}
\usepackage{graphicx}
\usepackage{booktabs}
\usepackage[table]{xcolor}

\usepackage[ruled,vlined,linesnumbered]{algorithm2e}
\usepackage{threeparttable}
\usepackage{algpseudocode}
\usepackage{tikz}
\usepackage{pifont}
\usepackage{amsthm}
\usepackage{cleveref}
\usetikzlibrary{arrows.meta, positioning, shapes.geometric, calc}
\SetKwComment{tcp}{ $\triangleright$ }{} 
\SetCommentSty{textit}                   
\SetKwFor{Upon}{Upon}{do}{end}
\newcommand{\EC}{\textsf{EC}}
\newcommand{\proname}{\textsc{ACE}}
\newcommand{\PBB}{\textsf{PBB}}
\newcommand{\tone}{\mathsf{T}_1}
\newcommand{\vi}{\vec{v}_i}
\newcommand{\tj}{\mathsf{T}_j}
\newcommand{\adv}{\mathcal{A}}
\newcommand{\vobs}{\mathsf{V}_{\text{obs}}}
\newcommand{\audit}{\mathsf{Audit-or-Cast}}
\newcommand{\Tdesign}{\mathsf{T}_{\text{des}}}
\newcommand{\frest}{$f_{\mathsf{res}}(\mathbb{T})$}
\newtheorem{definition}{Definition}
\newtheorem{lemma}{Lemma}
\newtheorem{theorem}{Theorem}

\newcommand{\pki}{\textsf{PKI}~}
\newcommand{\vs}[1]{}
\newcommand{\ar}[1]{}
\newcommand{\gt}[1]{}
\begin{document}

\title{Audit-or-Cast: Enforcing Honest Elections with Privacy-Preserving Public Verification}

\author{%
\IEEEauthorblockN{Aman Rojjha\textsuperscript{$\ast$}}
\IEEEauthorblockA{\textit{CSTAR, IIIT Hyderabad}\\
Hyderabad, India\\
{\footnotesize aman.rojjha@research.iiit.ac.in}}
\and
\IEEEauthorblockN{Gaurang Tandon\textsuperscript{$\ast$}}
\IEEEauthorblockA{\textit{IREL, IIIT Hyderabad}\\
Hyderabad, India\\
{\footnotesize gaurang.tandon@alumni.iiit.ac.in}}
\and
\IEEEauthorblockN{Varul Srivastava\textsuperscript{$\ast$}}
\IEEEauthorblockA{\textit{MLL, IIIT Hyderabad}\\
Hyderabad, India\\
{\footnotesize varulsrivastava06@gmail.com}}
\and
\IEEEauthorblockN{Kannan Srinathan}
\IEEEauthorblockA{\textit{CSTAR, IIIT Hyderabad}\\
Hyderabad, India\\
{\footnotesize srinathan@iiit.ac.in}}
\thanks{$^{\ast}$Equal first authors.}
}

\maketitle


%
\IEEEpeerreviewmaketitle

\begin{abstract}
Electronic voting systems must balance public verifiability with voter privacy and coercion resistance. Existing cryptographic protocols typically achieve end-to-end verifiability by revealing vote distributions, relying on trusted clients, or enabling transferable receipts --- design choices that often compromise trust or privacy in real-world deployments.

We present \textsf{\proname}, a voting protocol that reconciles public auditability with strong privacy guarantees. The protocol combines a publicly verifiable, tally-hiding aggregation mechanism with an \emph{Audit-or-Cast} challenge that enforces cast-as-intended even under untrusted client assumptions. Tallier-side re-randomization eliminates persistent links between voters and public records, yielding information-theoretic receipt-freeness assuming at least one honest tallier.

We formalize the security of \textsf{\proname} and show that it simultaneously achieves end-to-end verifiability, publicly tally-hiding, and strong receipt-freeness without trusted clients.
\end{abstract}

\section{Introduction}

Voting is a fundamental primitive for collective decision-making in modern societies, underpinning democratic governance, corporate control, and decentralized systems. As elections scale in size and adversarial sophistication, voting protocols must simultaneously satisfy strong correctness guarantees, voter privacy, and public auditability. Achieving these properties in a single system remains a central challenge in secure distributed systems and applied cryptography \cite{cortier2016sok,kusters2011verifiability}.

\noindent\textbf{Voting as a Classical Security Problem.}
The core tension in voting systems is long-standing: ballots must remain secret, yet election outcomes must be publicly verifiable. Historically, procedural safeguards and paper-based audits were used to approximate this balance, but these mechanisms do not scale well and remain vulnerable to insider attacks and coercion. Early cryptographic work formalized these challenges, introducing notions such as receipt-freeness and coercion-resistance to prevent voters from proving how they voted, even if they are willing to do so \cite{benaloh2006simple,okamoto1997receipt}.

\noindent\textbf{Cryptographic Voting Systems.}
Modern electronic voting protocols rely on cryptographic primitives such as public-key encryption, homomorphic tallying, commitment schemes, and zero-knowledge proofs to achieve end-to-end verifiability (E2E-V)~\cite{kiayias2015e2e}. Systems such as Helios~\cite{helios}, Pr\^{e}t \`a Voter~\cite{ryan2009pret}, and Ordinos~\cite{kusters2020ordinos} demonstrate that universal verifiability can be achieved in practice. However, these systems often leak full or partial tallies, rely on trusted clients, or permit transferable receipts, thereby weakening privacy guarantees such as receipt-freeness and public tally-hiding \cite{cortier2016sok,pankova2023relations}.

\noindent\textbf{National Elections --- Persistent Trust and Auditability Gaps.}
Despite formal guarantees, deployed election systems continue to face disputes rooted in limited auditability and opaque tallying processes~\cite{reuters_georgia_election_2024,reuters_venezuela_publish_records_2024,time_pakistan_phone_suspension_2024,ap_pakistan_blocks_x_2024,reuters_india_vvpat_sc_2024,guardian_bangladesh_election_2024}. The inability for voters or third parties to independently verify election integrity without compromising ballot secrecy remains a key source of distrust. This gap highlights the need for voting protocols that provide strong public accountability while remaining secure against coercion, malicious authorities, and compromised voting devices~\cite{kusters2010accountability}. 

\noindent\textbf{Problem Statement --- Efficient, Secure and Private elections.} The most critical application of voting in a democratic world is choosing national leaders. Attacks are two-way: \emph{manipulation of election results} to sway the results in a non-democratic direction -OR- \emph{rejection of results and undermining public opinion} based on claims of forgery. We aim to address the specific challenges of national-scale elections. The system should satisfy security and privacy properties along with providing audit at each step, eliminating the possibility of `claiming' forgery without proof. Towards this, we introduce properties for privacy and security in Table~\ref{tab:properties}.

\noindent\textbf{Our Approach.}
In this work, we introduce a cryptographic voting protocol \textbf{A}udit-\textbf{C}ast \textbf{E}lection (\proname) protocol that simultaneously achieves end-to-end verifiability, publicly tally-hiding, and strong receipt-freeness. Concretely, \proname{} combines tallier-side re-randomization of \emph{perfectly hiding} Pedersen vector commitments with a Benaloh-style Audit-or-Cast challenge. Perfect hiding makes privacy information-theoretic even against all-tallier collusion, obviating shuffle proofs. Our design enforces honest behavior among mutually adversarial participants via an \emph{Audit-or-Cast} mechanism, ensuring cast-as-intended even under untrusted client assumptions. Unlike prior approaches, our protocol enables public auditability of election correctness without revealing individual votes or intermediate tallies, thereby reconciling transparency and privacy within a unified cryptographic framework. Tab.~\ref{tab:protocol-comparison} summarizes the property and communication trade-offs against prior protocols.

\begin{table}[!th]
\centering
\small
\renewcommand{\arraystretch}{1.15}
\rowcolors{2}{gray!12}{white}
\begin{tabular}{cl>{\raggedright\arraybackslash}p{3.8cm}}
\toprule
\textbf{ID} & \textbf{Property} & \textbf{Description} \\
\midrule
P1 & Public Privacy~\cite{kryvos} &
Talliers cannot link voters to candidates. \\

P2 & Publicly Tally-Hiding &
Only $\Tdesign$ learns $\mathbb{T}$; no vote shares or margins are revealed. \\

P3 & Receipt-Freeness &
Voters cannot produce transferable proof of their vote. \\

S1 & Double-Vote Inhibition &
Each voter can cast at most one valid vote. \\

S2 & Vote Immutability &
Recorded votes cannot be altered by any party. \\

S3 & Cast-as-Intended &
Voters can verify correct recording of their intent. \\

S4 & Tally-as-Intended &
Anyone can verify correct tally computation. \\

S5 & End-to-End Verifiability &
Anyone can verify $f_{\mathrm{res}}(\mathbb{T})$ without ballot disclosure. \\
\bottomrule
\end{tabular}
\caption{Privacy (P) and Security (S) properties required of the voting protocol.}
\label{tab:properties}
\end{table}

\section{Related Works}

We organize related work along four axes: game-theoretic analyses of voting, cryptographic voting protocols, differential privacy approaches, and complementary frameworks.

\subsection{Game-Theoretic Analyses of Voting}

Game theory has been used to analyze strategic behavior in voting, particularly under coercion and manipulation. Benaloh’s randomized challenge mechanism \cite{benaloh2006simple} pioneered audits as a deterrent against ballot tampering. Jamroga \cite{jamroga2023pretty} formalizes this mechanism as a Stackelberg game, showing that simple randomized audit strategies achieve near-optimal security. Our \emph{Audit-or-Cast} mechanism naturally fits this framework.

\subsection{Cryptographic Voting Protocols}

Cryptographic voting protocols aim to achieve privacy, correctness, and verifiability under adversarial conditions \cite{cortier2016sok,kusters2011verifiability}. Systems such as Helios \cite{helios}, Pr\^{e}t \`a Voter \cite{ryan2009pret}, and Demos \cite{kiayias2015e2e} provide end-to-end verifiability but rely on trusted clients or leak vote distributions. Ordinos \cite{kusters2020ordinos} achieves publicly tally-hiding elections but permits transferable receipts. Blockchain-based systems such as DeVoS \cite{devos} provide immutability at the cost of receipt-freeness. Our protocol combines public auditability, tally-hiding, and receipt-freeness without trusted clients.

\subsection{Differential Privacy and Voting}

Differential privacy has been proposed to limit information leakage from election results \cite{dwork2006dp,nissim2007smooth}. These approaches trade exact correctness for statistical privacy and generally preclude cryptographic verifiability. Moreover, DP does not address coercion or receipt-freeness. Our work avoids noise-based privacy and instead relies on protocol-level guarantees.

\subsection{Other Related Work}

Formal frameworks analyze relationships between privacy, verifiability, accountability, and coercion-resistance \cite{kusters2010accountability,pankova2023relations}. Selene \cite{ryan2016selene} improves usability via voter trackers but introduces coercion risks through persistent identifiers. In contrast, our approach relies solely on ephemeral audits and public transcripts.

While existing solutions like Helios and Kryvos achieve excellent verifiability (S5), they force a trade-off: Helios reveals the full tally (failing P2) and Kryvos lacks strong receipt-freeness (failing P3). \textsf{\proname} is the first to achieve the `Holy Trinity' of Tally-Hiding, Receipt-Freeness, and Verifiability.
A comparative study of the protocols v/s ours can be found in~\Cref{tab:protocol-comparison}

\begin{table*}[!t]
\centering
\caption{Voting Protocols vs.\ Privacy (\textbf{P}), Security (\textbf{S}), and Communication (per-voter / per-tallier)}
\label{tab:protocol-comparison}
\begin{threeparttable}
\scriptsize
\setlength{\tabcolsep}{2.5pt}
\renewcommand{\arraystretch}{1.05}
\rowcolors{2}{gray!10}{white}
\begin{tabular}{lcccccccccc}
\toprule
\textbf{Protocol} & P1 & P2 & P3 & S1 & S2 & S3 & S4 & S5 & \textbf{Voter} & \textbf{Tallier} \\
\midrule
Helios~\cite{helios}
 & \textcolor{green!60!black}{\ding{51}} & \textcolor{red!70!black}{\ding{55}} & \textcolor{red!70!black}{\ding{55}}
 & \textcolor{green!60!black}{\ding{51}} & \textcolor{green!60!black}{\ding{51}} & \textcolor{green!60!black}{\ding{51}} & \textcolor{green!60!black}{\ding{51}} & \textcolor{red!70!black}{\ding{55}}
 & $O(1)$ & $O(n_t)$ \\
Fasten~\cite{fasten}
 & \textcolor{green!60!black}{\ding{51}} & \textcolor{red!70!black}{\ding{55}} & \textcolor{red!70!black}{\ding{55}}
 & \textcolor{green!60!black}{\ding{51}} & \textcolor{green!60!black}{\ding{51}} & \textcolor{green!60!black}{\ding{51}} & \textcolor{green!60!black}{\ding{51}} & \textcolor{red!70!black}{\ding{55}}
 & $O(1)$ & $O(n_t)$ \\
DEMOS~\cite{kiayias2015e2e}
 & \textcolor{green!60!black}{\ding{51}} & \textcolor{red!70!black}{\ding{55}} & \textcolor{red!70!black}{\ding{55}}\tnote{*}
 & \textcolor{green!60!black}{\ding{51}} & \textcolor{green!60!black}{\ding{51}} & \textcolor{green!60!black}{\ding{51}} & \textcolor{green!60!black}{\ding{51}} & \textcolor{green!60!black}{\ding{51}}
 & $O(1)$ & $O(n_t)$ \\
Pr\^et \`a Voter~\cite{ryan2009pret}
 & \textcolor{green!60!black}{\ding{51}} & \textcolor{red!70!black}{\ding{55}} & \textcolor{red!70!black}{\ding{55}}\tnote{**}
 & \textcolor{green!60!black}{\ding{51}} & \textcolor{green!60!black}{\ding{51}} & \textcolor{green!60!black}{\ding{51}} & \textcolor{green!60!black}{\ding{51}} & \textcolor{red!70!black}{\ding{55}}
 & $O(1)$ & $O(k_{\mathrm{mix}})$\,+\,shuf. \\
Kryvos~\cite{kryvos}
 & \textcolor{green!60!black}{\ding{51}} & \textcolor{green!60!black}{\ding{51}} & \textcolor{red!70!black}{\ding{55}}
 & \textcolor{green!60!black}{\ding{51}} & \textcolor{green!60!black}{\ding{51}} & \textcolor{red!70!black}{\ding{55}} & \textcolor{green!60!black}{\ding{51}} & \textcolor{green!60!black}{\ding{51}}
 & $O(1)$ & $O(\mathrm{poly}(n_t))$ \\
Ordinos~\cite{kusters2020ordinos}
 & \textcolor{green!60!black}{\ding{51}} & \textcolor{green!60!black}{\ding{51}} & \textcolor{red!70!black}{\ding{55}}
 & \textcolor{green!60!black}{\ding{51}} & \textcolor{green!60!black}{\ding{51}} & \textcolor{green!60!black}{\ding{51}} & \textcolor{green!60!black}{\ding{51}} & \textcolor{green!60!black}{\ding{51}}
 & $O(1)$ & $O(\mathrm{poly}(n_t))$ \\
DeVoS~\cite{devos}
 & \textcolor{green!60!black}{\ding{51}} & \textcolor{red!70!black}{\ding{55}} & \textcolor{red!70!black}{\ding{55}}
 & \textcolor{green!60!black}{\ding{51}} & \textcolor{green!60!black}{\ding{51}} & \textcolor{green!60!black}{\ding{51}} & \textcolor{green!60!black}{\ding{51}} & \textcolor{red!70!black}{\ding{55}}
 & $O(1)$ & $O(n_t)$ \\
Selene~\cite{ryan2016selene}
 & \textcolor{green!60!black}{\ding{51}} & \textcolor{red!70!black}{\ding{55}} & \textcolor{red!70!black}{\ding{55}}
 & \textcolor{green!60!black}{\ding{51}} & \textcolor{green!60!black}{\ding{51}} & \textcolor{green!60!black}{\ding{51}} & \textcolor{green!60!black}{\ding{51}} & \textcolor{green!60!black}{\ding{51}}
 & $O(1)$ & $O(n_t)$ \\
\midrule
\rowcolor{green!15}
\textbf{\textsf{\proname} (Ours)}
 & \textbf{\textcolor{green!70!black}{\ding{51}}} & \textbf{\textcolor{green!70!black}{\ding{51}}} & \textbf{\textcolor{green!70!black}{\ding{51}}}
 & \textbf{\textcolor{green!70!black}{\ding{51}}} & \textbf{\textcolor{green!70!black}{\ding{51}}} & \textbf{\textcolor{green!70!black}{\ding{51}}} & \textbf{\textcolor{green!70!black}{\ding{51}}} & \textbf{\textcolor{green!70!black}{\ding{51}}}
 & $\mathbf{O(k\,n_t)}$ & $\mathbf{O(n_t)}$ \\
\bottomrule
\end{tabular}
\begin{tablenotes}
\scriptsize
\item[*] DEMOS fails our receipt-freeness definition if the voter shares code sheet + session transcript with a coercer.
\item[**] Pr\^et \`a Voter claims receipt-freeness but is vulnerable to ``pattern coercion'' via pre-cast ballot photography.
\end{tablenotes}
\end{threeparttable}
\end{table*}

\section{Preliminaries}\label{sec:preliminaries}

We briefly recall the cryptographic primitives and system abstractions
used throughout the protocol.
We treat all constructions as black boxes, assuming security with a computational security parameter $\ell$.

\begin{table}[htbp]
\caption{Notation and Terminology}
\label{tab:terminology}
\centering
\begin{tabular}{p{0.25\linewidth}p{0.70\linewidth}}
\hline
\textbf{Notation} & \textbf{Description} \\
\hline
$\mathcal{V}$ & Voter set, where $n = |\mathcal{V}|$ \\
\hline
$C$ & Choice set, where $C \subseteq (\mathbb{F_q})^{n_{\text{choices}}}$ where $n_{\text{choices}}$ is the number of candidates and $q$ is the maximum number of votes allowed for a single candidate \\
\hline
Election Commission (EC) & Centralized entity for setting up election. Special meaning for \textit{publicly tally-hiding} in elections. \\
\hline
Talliers $\mathsf{T}_j$ & Tallier set $\mathcal{T}$ (size $n_t$) responsible for receiving voter shares $v_i^{(j)}$ from the voters \\
\hline
$\mathsf{T}_{\text{des}}$ & Designated tallier for accumulating the final tally \\
\hline
$\vec{v}_i^{(j)}$ & Vote share submitted by $i$th voter to tallier $\tj$ \\
\hline
$\vec{v}_i$ & Vote finalized by voter $i$ \\
\hline
$c_i^{(k)}$ & Pedersen Vector Commitment\cite{pedersonvector} over vote share $\vec{v}_i^{(k)}$ under the group generators $g$ and $h$ as specified on $\PBB$\\
\hline
$r_i^{(j)}$ & Randomness used by voter $i$ for computing commitment $c_i^{(j)}$ for it's $j$th vote-share $\vec{v}_i^{(j)}$ \\
\hline
$\tilde{c}_i^{(k)}$ & Pedersen Commitment\cite{pedersoncommitment} over vote share $\vec{v}_i^{(k)}$ under the group generators $g$ and $h$ as specified on $\PBB$ \\
\hline
$\tilde{r}_i^{(j)}$ & Randomness used by tallier $\tj$ for commiting to the voter $i$'s vote-share commitment $c_i^{(j)}$\\
\hline
$\mathbb{T}$ & Final tally after voting period ends \\
\hline
$f_{\text{res}}$ & Efficient (polynomial) function that computes the  final result from tally $\mathbb{T}$ \\
\hline
$\PBB$ & Public Bulletin Board\cite{bulletinboards} providing robust storage for election transcript \\
\hline
Auditors & Any public entity (including voters themselves) which can verify the properties of the protocol from the transcripts available on $\PBB$ \cite{bulletinboards} \\
\hline
\end{tabular}
\end{table}

\subsection{Public Bulletin Boards}\label{sec:pbb}

A public bulletin board (\PBB) is an append-only, publicly readable
broadcast channel with the following properties \cite{bulletinboards}:

\begin{itemize}
    \item \textbf{Persistence:} Once a message is posted, it cannot be removed or altered.
    \item \textbf{Public Verifiability:} Any party can independently read and verify all entries.
    \item \textbf{Global Consistency:} All honest parties observe the same board contents.
\end{itemize}

The \PBB\ may be instantiated using a permissionless or permissioned
blockchain, a replicated append-only log, or a consensus-backed ledger.
We treat the \PBB\ as an ideal abstraction.

\subsection{Pedersen Vector Commitment}\label{def:pedersenvector}
Let $pp_{\mathsf{CommVec}} = (\mathbb{G}, q, h, \mathbf{g})$ be the public parameters where $\mathbf{g} = (g_1, \dots, g_n) \in \mathbb{G}^n$ and $h \in \mathbb{G}$. We define the algorithms as follows:

\begin{enumerate}
    \item $\mathsf{CommVec}(\vec{v}; r)$: Computes a commitment $c$ to vector $\vec{v} \in \mathbb{F}_q^n$ with randomness $r \in \mathbb{F}_q$:
    \begin{equation}
        c = h^r \cdot \prod_{k=1}^{n} g_k^{v[k]}
    \end{equation}

    \item $\mathsf{ReRand}(c; r')$: Updates commitment $c$ with fresh randomness $r' \in \mathbb{F}_q$ to produce $c'$:
    \begin{equation}
        c' = c \cdot h^{r'} = \mathsf{CommVec}(\vec{v}; r + r')
    \end{equation}

    \item $\mathsf{DeRand}(c'; r')$: Reverts $c'$ using factor $r'$ to retrieve original commitment $c$:
    \begin{equation}
        c = c' \cdot h^{-r'} = \mathsf{CommVec}(\vec{v}; (r+r') - r')
    \end{equation}
\end{enumerate}

\noindent \textbf{Relationship:} The functions satisfy the consistency relation $\mathsf{DeRand}(\mathsf{ReRand}(c; r'); r') = c$.

\subsection{\pki}

We assume \textit{Public  Key Infrastructure}, i.e., all the involved entities (voter, $\tj$, auditors) have a consistent view of each other's public keys.

\noindent \textbf{Digital Signatures} Schemes are a $3$-tuple $(\mathsf{Gen, Sign, Vrfy})$ ($(\mathsf{Gen(\cdot), Sign_{sk}(m), Vrfy_{pk}(\sigma,m)})$) which can verify the integrity of a message w.r.t a public key $pk$, i.e., if the message $m$ was sent by the entity holding the secret key $sk$ corresponding to $pk$ with which message was signed.

\subsection{Non-Interactive Zero Knowledge Proofs}\label{def:nizkp}
Let $\mathcal{R}$ be a relation generator for NP statements, defined as pairs $(x, w)$ where $x$ is the public instance and $w$ is the private witness. A NIZKP scheme $\Pi$ consists of three efficient algorithms:

\begin{enumerate}
    \item $\Pi.\mathsf{Setup}(\mathcal{R}) \to (\sigma_{pk}, \sigma_{vk})$: A trusted setup that takes the relation $\mathcal{R}$ (represented as an arithmetic circuit) and outputs a proving key $\sigma_{pk}$ and a verification key $\sigma_{vk}$.
    
    \item $\Pi.\mathsf{Prove}(\sigma_{pk}, x, w) \to \pi$: Given the proving key, a public statement $x$, and a valid witness $w$ such that $(x, w) \in \mathcal{R}$, outputs a succinct proof $\pi$ consisting of three group elements.
    
    \item $\Pi.\mathsf{Verify}(\sigma_{vk}, x, \pi) \to \{0, 1\}$: Given the verification key, statement $x$, and proof $\pi$, outputs $1$ (Accept) if the proof is valid, and $0$ (Reject) otherwise.
\end{enumerate}

\noindent The protocol satisfies \textit{Completeness}, \textit{Computational Knowledge Soundness}, and \textit{Perfect Zero-Knowledge}. We use \cite{groth16} proof system.

\subsection{Sequential Audit-or-Cast Interactive Proof}\label{seq-audit-cast}

The $\mathsf{Audit-or-Cast}$ phase \Cref{audit-phase} functions as a \textbf{sequential Cut-and-Choose}\cite{cut-and-choose} interactive proof system between a Prover (Tallier $\tj$) and a Verifier (Voter $v_i$), just like \cite{benaloh-challenge}. The goal is to prove the statement $\mathcal{S}$ that the published commitment $\tilde{c}$ is a valid re-randomization of the voter's input $c$, i.e., $\mathcal{S}:\exists \tilde{r} \in \mathbb{F}_q \text{ s.t. } \tilde{c} = c \cdot h^{\tilde{r}}$.

The protocol requires the voter to perform $k$ sequential executions, auditing the first $k-1$ attempts and casting the $k$-th, where $k \in \mathbb{N}$ is arbitrary (unknown to the adversary $\adv$).

\begin{definition}[$\mathsf{PoIO}$] \label{PoIO}
A Proof of Incorrect Opening is a tuple $\pi_{\bot} = (i, j, \tilde{c},\tilde{r})$ where:
\begin{itemize}
    \item $i, j$ are the voter and tallier indices.
    \item $\tilde{c}$ is the re-randomized commitment published on $\PBB$.
    \item $\tilde{r}$ is the opening scalar sent by $\tj$ to $v_i$.
\end{itemize}
The proof $\pi_{\bot}$ is \textbf{valid} if and only if    $\tilde{c} \neq c_i^{(j)} \cdot h^{\tilde{r}}$.
\end{definition}

\begin{definition}[Proof of Incorrect Opening (NIZK variant)]\label{def:poio-nizkp}
If aggregate verification fails (i.e., $\pi_{\text{vote}_i}$ rejects against the reconstructed $c_i=\prod_{j=1}^{n_t} c_i^{(j)}$), the talliers publish
\[
\mathsf{PoIO}_{\text{NIZKP}} = \bigl(i,\;\pi_{\text{vote}_i},\;\{\langle c_i^{(j)}, \sigma_{v_i}(c_i^{(j)})\rangle\}_{j=1}^{n_t}\bigr).
\]
Blame is deterministic: any $\tj$ unable to supply $\sigma_{v_i}(c_i^{(j)})$ for its published share is evicted; otherwise $v_i$ is classified as malicious. Under a $(1,n_t)$ model, a single honest tallier suffices to surface this proof.
\end{definition}

\section{Model}\label{sec:model}
We describe the system/entities, threat assumptions, and formal properties required of the protocol. 
Our presentation follows standard e-voting modeling frameworks~\cite{kusters2010accountability,kusters2011verifiability}.

\subsection{Participants}\label{sec:participants}

The protocol runs among the following entities:

\begin{itemize}
    \item \textbf{Election Commission (\EC):}
    \EC\ executes a \emph{one-time} pre-election setup (CRS via a distributed MPC ceremony, or a CRS-free transparent NIZK such as Bulletproofs~\cite{bulletproofs}) but holds no runtime authority.

    \item \textbf{Voters ($v_i \in \mathcal{V}$):} 
    Each voter selects a choice $\vi \in C \subseteq (\mathbb{F}_q)^{n_{\text{choices}}}$ and participates
    in the Audit-or-Cast protocol to submit their vote.

    \item \textbf{Talliers ($\tj \in \mathcal{T}$):} 
    A set of mutually distrustful parties responsible for receiving blinded vote shares, validating openings, and contributing to tally reconstruction.

    \item \textbf{Public Bulletin Board (\PBB):}
    An append-only, publicly readable, and persistent broadcast channel~\cite{bulletinboards}
    that stores commitments, proofs, and final results. \PBB\ runs as a smart contract on a permissionless blockchain: it stores transcripts, enforces phase transitions via block heights, and rejects duplicate submissions per voter ID. This eliminates EC-led DoS/censorship and closes the double-vote channel at the ledger layer.

    \item \textbf{Designated Tallier ($\Tdesign$):}
    A publicly known entity responsible for reconstructing the final tally
    and producing a zero-knowledge proof of correctness.

    \item \textbf{Judge (\textsf{J}):}
    A virtual verification entity that, given only public information,
    accepts or rejects a protocol execution.

    \item \textbf{Scheduler (\textsf{S}):}
    A virtual process responsible for coordinating protocol phases
    and modeling the behavior of \EC.
\end{itemize}

All parties possess public-key identities established via a PKI,
and all messages are digitally signed.

\subsection{Threat Model}\label{sec-model}

We model the adversary $\mathcal{A}$ as a static, active (malicious) entity capable of corrupting a subset of protocol participants. We assume mutually authenticated channels \cite{mtls-rfc} between all parties.

\begin{itemize}
    \item \textbf{Voter Independence:} 
    Let $\mathcal{V}_{corr} \subset \mathcal{V}$ be the set of voters corrupted by $\mathcal{A}$. We assume that malicious voters are computationally independent; they do not share their private witnesses (randomness $r_i$ or shares $\vec{v}_i^{(j)}$) to construct a joint proof. 
    \item \textbf{Tallier Thresholds:}
    \begin{itemize}
        \item \textit{Privacy:} Since the protocol uses $(n, n)$-threshold secret sharing, voter privacy is preserved as long as \textit{at least one tallier remains honest} and refuses to reveal their received shares.
        \item \textit{Integrity (Anti-Censorship):} We assume that a tallier $\tj$ does not collude with the entire remaining set of talliers $\mathcal{T} \setminus \{\tj\}$ to suppress valid votes.
    \end{itemize}
    Substituting a $(t, n_t)$ Pedersen verifiable secret sharing~\cite{pedersoncommitment} preserves the homomorphism while tolerating offline talliers.
\end{itemize}

\subsection{Protocol properties}\label{protocol-properties}

We discuss the intended privacy (\textbf{P}) and security (\textbf{S}) properties informally below:

\begin{itemize}
    \item[P1] \textbf{Public Privacy}: Any proper subset of ${\tj} \subset \mathcal{T}$ should not be able to verify if $v_i$ corresponds to a candidate $c_i$.
    \item[P2] \textbf{Publicly Tally-Hiding}: $\Tdesign$ knows final tally $\mathbb{T}$ but no one else does. 
    Public disclosure of exact vote counts may deter candidates fearing reputational damage from wide margins of defeat. On the other hand, if a candidate widely perceived as popular wins by only a narrow margin (say, 51\% to 49\%), it undermines their perceived mandate and weakens their position. Therefore, public exposure can damage community trust and political stability in either case.
    \item[P3] \textbf{Receipt-Freeness}: The voter should not be able to obtain information proving their vote\cite{bernhard2011helios}. This is to avoid voter coercion by third party or vote-selling by the voter himself. Also note that receipt-freeness inherently implies coercion-resistance.
    \item[S1] \textbf{Double-Vote Inhibition}: A voter should not be able to vote multiple times.
    \item[S2] \textbf{Vote Immutability}: Any party $i$ should not be able to change $v_i$ (including the voter himself) once the vote has been casted/finalized.
    \item[S3] \textbf{Cast-as-intended}: A voter should be able to verify if his vote $v_i$ has been recorded correctly.
     \item[S4] \textbf{Tally-as-intended}: Any voter should be able to verify the final tally $\mathbb{T}$ is correct.
    \item[S5] \textbf{End-to-end Verifiability}: When voting is over and winner (or set of winners) is declared, then any person should be able to verify the results $f_{res}(\mathbb{T})$.
\end{itemize}

\begin{algorithm}[t]
\SetAlgoLined
\DontPrintSemicolon
\caption{Vote Generation and Audit-or-Cast (Voter $v_i$)}\label{alg:vote-gen}
\KwIn{Vote $v_i \in \mathcal{C}$, Talliers $\mathcal{T} = \{\mathsf{T}_1, \dots, \mathsf{T}_{n_t}\}$, Public Board $\PBB$}
\KwOut{Committed vote on $\PBB$ or $\bot$}

\tcp{1. Vote Generation \& Sharding}
Split $v_i$ into full-threshold shares $\{\vec{v}_i^{(j)}\}_{j=1}^{n_t}$ s.t. $\sum \vec{v}_i^{(j)} = v_i \mod q$\;
Sample randomness $r_i^{(j)} \leftarrow \mathbb{Z}_q$ for all $j \in [n_t]$\;
Compute commitments $c_i^{(j)} \leftarrow \mathsf{CommVec}(\vec{v}_i^{(j)}; r_i^{(j)})$\;
Generate NIZKP $\pi_{\text{vote}_i}$ proving well-formedness of $v_i$ and $\{c_i^{(j)}\}$\;

\tcp{2. Blinded Submission}
\For{$j \in [n_t]$}{
    Send $(i, c_i^{(j)})$ to $\mathsf{T}_j$ via private channel\;
}
Verify presence of re-randomized commitments $\{(i, \tilde{c}_i^{(j)})\}$ on $\PBB$\;

\tcp{3. Audit-or-Cast Decision}
Select $b \leftarrow \{0, 1\}$ \tcp*[r]{0: Audit, 1: Cast}
\If{$b = 0$ (\textsf{AUDIT})}{
    Send $\mathsf{AUDIT}$ request to all $\mathsf{T}_j \in \mathcal{T}$\;
    Receive blinding factors $\{\tilde{r}_i^{(j)}\}$ from talliers\;
    \For{$j \in [n_t]$}{
        \If{$c_i^{(j)} \neq \mathsf{DeRand}(\tilde{c}_i^{(j)}; \tilde{r}_i^{(j)})$}{
            Construct and publish \textbf{PoIO} on $\PBB$\;
            \textbf{return} $\bot$ (Restart protocol)\;
        }
    }
    Wait for $\PBB$ to discard submission, then \textbf{goto} Step 1\;
}
\Else(\textsf{CAST}){
    Send $\mathsf{CAST}$ request to all $\mathsf{T}_j \in \mathcal{T}$\;
    Wait for $\mathsf{CAST}$ finalization on $\PBB$\;
    \tcp{4. Opening}
    \For{$j \in [n_t]$}{
        Send opening $(\vec{v}_i^{(j)}, r_i^{(j)})$ to $\mathsf{T}_j$ via \textbf{private channel}\;
    }
}
\end{algorithm}

\subsection{Voting phase}\label{voting-phase}

\subsubsection{Vote Generation}\label{phase:vote-gen}
This phase represents the local computational steps performed by Voter $i$. Voter $i$ selects their vote $\vi \in \mathcal{C}$ and do the following:
\begin{enumerate}
    \item \textbf{Sharding:} The voter splits the vote into fully-threshold vote shares $\vec{v}_i^{(j)}$ such that $\sum_{j=1}^{n_t}\vec{v}_i^{(j)} \mod q = \vi$.
    \item \textbf{Commitment:} The voter creates Pedersen Vector Commitments \cite{pedersonvector} for each share:
    $$c_i^{(j)} =  \mathsf{CommVec}(\vec{v}_i^{(j)}; r_i^{(j)})\quad \forall \ j \in [n_t]$$
    \item \textbf{Proof Generation:} Voter $i$ generates a NIZKP $\pi_{\text{vote}_i}$ attesting to the well-formedness of $\vi \in \mathcal{C}$ and that the set of commitments $\{c_i^{(j)}\}_{j=1}^{n_t}$ reconstructs to this valid vote.
\end{enumerate}  




\subsubsection{Blinded Submission} \label{phase:blind-submission}
\begin{itemize}
    \item Voter $i$ sends the commitment $c_i^{(j)}$ to each Tallier $\tj$ over a \textbf{private channel}.
    \item Tallier $\tj$ generates a random blinding factor $\tilde{r}_i^{(j)}$ and \textbf{re-randomizes} the commitment \cite{pedersoncommitment}:
    $$\tilde{c}_i^{(j)} = \mathsf{ReRand}(c_i^{(j)};\tilde{r}_i^{(j)}) = \mathsf{CommVec}(\vi^{(j)}; r_i^{(j)}+\tilde{r}_i^{(j)})$$
    \item $\tj$ publishes the re-randomized commitment pair $(i, \tilde{c}_i^{(j)})$ on the $\PBB$.
\end{itemize}

\subsubsection{\textsf{Audit-or-Cast} Mechanism}\label{audit-phase}

Once the re-randomized commitments appear on the $\PBB$, Voter $i$ verifies their presence. The voter must then choose to strictly execute \textbf{one} of the following actions:

\noindent \textbf{Option A: AUDIT (Challenge).} The voter challenges the talliers to prove they stored the commitments honestly.
\begin{enumerate}
    \item Voter sends an $\mathsf{AUDIT}$ request to all $\tj \in \mathcal{T}$.
    \item Each $\tj$ responds by privately sending the blinding factor $\tilde{r}_i^{(j)}$ to Voter $i$.
    \item Voter checks if $c_i^{(j)} = \mathsf{DeRand}(\tilde{c}_i^{(j)}; \tilde{r}_i^{(j)})$ where $\tilde{c}_i^{(j)}$ exists on $\PBB$.
    \item \textit{Outcome:} The vote $\vi$ and its associated commitments on $\PBB$ become \textbf{invalid}. The talliers discard the submission. The Voter \textbf{must return to Phase \ref{phase:vote-gen}} (Vote Generation) to generate fresh randomness and shares before re-submitting.\footnotemark{}
    \end{enumerate}

\noindent \textbf{Option B: CAST.} If the voter is satisfied (or chooses not to audit), they finalize the vote.
\begin{enumerate}
    \item Voter sends a $\mathsf{CAST}$ request to all $\tj$.
    \item This signal \textit{commits} the vote-shares for voter $i$ as final on the $\PBB$. No further auditing of this specific instance is permitted.
    \end{enumerate}
    
\footnotetext{Security Note: If a voter detects a mismatch during Audit (meaning $\tj$ maliciously altered the commitment), the voter constructs a Proof of Incorrect Opening (PoIO) using the digitally signed messages from the submission step. This is published to $\PBB$, triggering the removal of the malicious $\tj$.}
    

\subsubsection{\textbf{Opening and Verification}}\label{phase:opening}
This phase executes only after the $\mathsf{CAST}$ signal is finalized for voter $i$ on the $\PBB$.

\begin{enumerate}
    \item \textbf{Opening:} Voter $i$ sends the openings $(\vi^{(j)}, r_i^{(j)})$ to the respective $\tj\in\mathcal{T}$ over a \textbf{private channel}.
    \item \textbf{Tallier Verification:} Upon receiving share $c_i^{(j)}$, $\tj$ stores the voter-supplied signature $\sigma_{v_i}(c_i^{(j)})$ and validates the opening against the stored commitment.
    \item \textbf{Synchronization:} All $\tj \in \mathcal{T}$ synchronize the commitments to reconstruct the aggregated commitment $c_i = \prod_{j=1}^{n_t} c_i^{(j)}$.
    \item \textbf{Proof Validation:} Each $\tj$ verifies the NIZKP $\pi_{\text{vote}_i}$. Upon success, the talliers generate a threshold signature $\sigma_{\mathcal{T}}$ attesting to the validity of the vote and publish it to $\PBB$.
\end{enumerate}



\subsubsection{Validity Tests}\label{validity-tests}
\begin{enumerate}
    \item In the $\audit$ phase,  if the tallier $\tj$ sends incorrect opening $\hat{r}_i^{(j)}$, voter $v_i$ creates a \textit{Proof of Incorrect Opening (PoIO)} (discussed more in \Cref{PoIO}) of $j$th vote-share commitment along with digitally-signed opening $\hat{r}_i^{(j)}$ and commitment $\tilde{c}_i^{(j)}$ it received and publishes on $\PBB$. 
    \item If $\tj$ finds $\pi_{\text{vote}_i}$ invalid, vote-shares for voter $i$ are considered \textit{invalid} and discarded from the final tally after publishing $\mathsf{PoIO}_{\text{NIZKP}}$ (\Cref{def:poio-nizkp}) along with the stored signatures $\sigma_{v_i}(c_i^{(j)})$ and aggregate commitment $c_i$; blame assignment follows the deterministic eviction rule in \Cref{def:poio-nizkp}.
    
    Upon receiving a ballot, $\PBB$ checks the current state to verify that the associated voter has not previously committed a vote. If a prior entry exists, $\PBB$ aborts the operation and rejects the submission; otherwise, the vote is committed.
    \item If \textbf{taillier verification} \Cref{phase:opening} fails for a vote-share opening $(\vec{v}_i^{(j)}, r_i^{(j)})$ at tallier $\tj$, $\tj$ publishes a  \textit{Proof of Incorrect Opening (PoIO)} along with digitally-signed messages it received $(\vec{v}_i^{(j)}, r_i^{(j)})$ on the $\PBB$. This proof consists of the digitally signed commitment $c_i^{(j)}$ (from the Voting Phase) and the conflicting opening received. 
\end{enumerate}

\textbf{Voting} phase ends after $\textsf{CAST}$ signal is finalized for all the voters $v_i \in \mathcal{V}$ on the $\PBB$.

\begin{algorithm}[t]
\SetAlgoLined
\DontPrintSemicolon
\caption{Tallier Operations (Tallier $\mathsf{T}_j$)}\label{alg:tallier-ops}
\KwIn{Incoming commitments $c_i^{(j)}$, signals from $v_i \forall v_i \in \mathcal{V}$}

\tcp{Phase: Blinded Submission}
\Upon{receiving $c_i^{(j)}$ from $v_i$}{
    Sample blinding factor $\tilde{r}_i^{(j)} \leftarrow \mathbb{Z}_q$\;
    Compute $\tilde{c}_i^{(j)} \leftarrow \mathsf{ReRand}(c_i^{(j)};\tilde{r}_i^{(j)})$\;
    Publish $(i, \tilde{c}_i^{(j)})$ to $\PBB$\;
}

\tcp{Phase: Audit-or-Cast}
\Upon{receiving $\mathsf{AUDIT}$ from $v_i$}{
    Send $\tilde{r}_i^{(j)}$ to $v_i$ over private channel\;
    Discard stored vote-share commitment $\tilde{c}_i^{(j)}$ for $v_i$ from $\PBB$\;
}
\Upon{receiving $\mathsf{CAST}$ from $v_i$}{
    Mark vote-share commitment $\tilde{c}_i^{(j)}$ as \textbf{finalized} on $\PBB$\;
    Receive opening $(\vec{v}_i^{(j)}, r_i^{(j)})$ from $v_i$\;
    \If{$\mathsf{CommVec}(\vec{v}_i^{(j)}; r_i^{(j)}) \neq c_i^{(j)}$}{
        Publish \textbf{PoIO} on $\PBB$ and discard vote\;
    }
    Store verified share $\vec{v}_i^{(j)}$\;

}
\tcp{Synchronization \& Proof Validation}
Exchange commitments $c_i^{(j)}$ with other talliers $\mathcal{T} \setminus \{\tj\}$\;
Reconstruct aggregated commitment $c_i \leftarrow \prod_{k=1}^{n_t} c_i^{(k)}$\;
\If{$\Pi_{\mathcal{R}_{\text{vote}}}.\mathsf{Verify}(\pi_{\text{vote}_i}, c_i) = \text{valid}$}{
    Generate threshold signature share $\sigma_{\mathcal{T}}^{(j)}$\;
    Cooperatively aggregate $\sigma_{\mathcal{T}}$ and publish to $\PBB$\;
} \Else {
    Flag voter $i$ as invalid on $\PBB$\;
    Reset voter $i$'s vote-share $\vi^{(j)} = \vec{0}$\;
}
\tcp{Phase: Tally Aggregation (Post-Voting)}
Compute aggregate share $\vec{v}_{\perp}^{(j)} = \sum_{i=1}^{n} \vec{v}_i^{(j)}$\;
Compute aggregate randomness $r_{\perp}^{(j)} = \sum_{i=1}^n r_i^{(j)}$ and $\tilde{r}_{\perp}^{(j)} = \sum_{i=1}^n \tilde{r}_{i}^{(j)}$\;
Send tuple $(\vec{v}_{\perp}^{(j)}, r_{\perp}^{(j)}, \tilde{r}_{\perp}^{(j)})$ to $\Tdesign$\;
\end{algorithm}

\subsection{Tally Aggregation}\label{tally-phase}
For simplicity, we assume that all vote-shares were valid. For each $j \in [n_t]$, Tallier $\tj$ performs the following:
\begin{enumerate}
    \item \textbf{Share Aggregation:} $\tj$ sums the valid vote-shares and corresponding randomness for all $n$ voters:
    $$ \vec{v}_{\perp}^{(j)} = \sum_{i=1}^{n} \vec{v}_i^{(j)}, \quad r_{\perp}^{(j)} = \sum_{i=1}^n r_i^{(j)}, \quad \tilde{r}_{\perp}^{(j)} = \sum_{i=1}^n \tilde{r}_{i}^{(j)} $$
    
    \item \textbf{Transmission:} $\tj$ sends the tuple $(\vec{v}_{\perp}^{(j)}, r_{\perp}^{(j)}, \tilde{r}_{\perp}^{(j)})$ to the designated entity $\Tdesign$ over an authenticated channel.
\end{enumerate}

\subsection{Result Phase}\label{phase:result}
$\Tdesign$ is responsible for reconstructing the final tally from the aggregates provided by the talliers. 

\subsubsection{Tallier Consistency Check}\label{step:consistency}
Before reconstruction, $\Tdesign$ must verify that the aggregate shares received from each $\tj$ correspond to the sum of commitments published on $\PBB$.

For every tallier $\tj \in \mathcal{T}$:
\begin{enumerate}
    \item $\Tdesign$ computes the expected aggregate commitment from the public board:
    $$ \tilde{c}_{\perp}^{(j)} = \prod_{i=1}^{n} \tilde{c}_i^{(j)} $$
    \item $\Tdesign$ verifies\footnote{If this verification fails for any $\tj$, the tallier is proven malicious. A \textit{PoIO} is published on $\PBB$, the tallier $\tj$'s shares are discarded, and voters must re-submit shares to a fallback tallier $\mathsf{T}_k$.} that the received shares are a valid opening for this aggregate commitment:
    $$ \tilde{c}_{\perp}^{(j)} \stackrel{?}{=} \mathsf{CommVec}(\vec{v}_{\perp}^{(j)}; \ r_{\perp}^{(j)} + \tilde{r}_{\perp}^{(j)}) $$
\end{enumerate}

\subsubsection{Global Reconstruction}
Upon successful verification of all talliers, $\Tdesign$ computes the final election tally $\mathbb{T}$ by summing the aggregate shares:
$$ \mathbb{T} = \sum_{j=1}^{n_t} \vec{v}_{\perp}^{(j)} $$

\subsubsection{Result Publication and Proof}
To finalize the election, $\Tdesign$ publishes the result and proves its correctness relative to the public commitments.

\begin{enumerate}[leftmargin=*]
    \item \textbf{Derandomization:} $\Tdesign$ aggregates the blinding factors to isolate the global randomness:
    $$ \tilde{r}_{\perp} = \sum_{j=1}^{n_t} \tilde{r}_{\perp}^{(j)} $$
    Note that the global commitment to the result $c_{\perp}$ can be derived publicly as:
    $$ c_{\perp} = \mathsf{DeRand}\left( \prod_{j=1}^{n_t} \tilde{c}_{\perp}^{(j)}; \ \tilde{r}_{\perp} \right) $$
    
    \item \textbf{Proof Generation:} $\Tdesign$ generates a NIZKP $\pi_{\text{res}}$ for the relation $\mathcal{R}_{\text{res}}$, proving that the final tally $\mathbb{T}$ used for calculating the plain-text result $f_{\text{res}}(\mathbb{T})$ corresponds to the committed value in $c_{\perp}$.
    
    \item \textbf{Publication:} $\Tdesign$ publishes the tuple $(f_{\text{res}}(\mathbb{T}), \tilde{r}_{\perp}, \pi_{\text{res}})$ on $\PBB$.
\end{enumerate}

\begin{algorithm}[t]
\SetAlgoLined
\DontPrintSemicolon
\caption{Result Reconstruction ($\Tdesign$)}\label{alg:result}
\KwIn{Aggregates from Talliers $\{ (\vec{v}_{\perp}^{(j)}, r_{\perp}^{(j)}, \tilde{r}_{\perp}^{(j)}) \}_{j=1}^{n_t}$, $\PBB$}
\KwOut{Final Tally $\mathbb{T}$ and Proof $\pi_{\text{res}}$}

\For{$j \in [n_t]$}{
    Fetch public aggregate $\tilde{c}_{\perp}^{(j)} = \prod_{i=1}^{n} \tilde{c}_i^{(j)}$ from $\PBB$\;
    \tcp{Consistency Check}
    \If{$\tilde{c}_{\perp}^{(j)} \neq \mathsf{CommVec}(\vec{v}_{\perp}^{(j)}; r_{\perp}^{(j)} + \tilde{r}_{\perp}^{(j)})$}{
        Identify malicious $\mathsf{T}_j$ via PoIO and \textbf{abort}\;
    }
}

\tcp{Global Reconstruction}
Compute Final Tally $\mathbb{T} = \sum_{j=1}^{n_t} \vec{v}_{\perp}^{(j)}$\;
Compute Global Blinding $\tilde{r}_{\perp} = \sum_{j=1}^{n_t} \tilde{r}_{\perp}^{(j)}$\;
Reconstruct global commitment $c_{\perp} \leftarrow \mathsf{DeRand}(\prod_{j} \tilde{c}_{\perp}^{(j)}; \tilde{r}_{\perp})$\;

\tcp{Proof Generation}
Generate NIZKP $\pi_{\text{res}}$ s.t. $\mathcal{R}_{\text{res}}(\mathbb{T}, c_{\perp})$ holds\;
Publish $(f_{\text{res}}(\mathbb{T}), \tilde{r}_{\perp}, \pi_{\text{res}})$ on $\PBB$\;
\end{algorithm}

\subsection{Verification Phase}
Auditors can verify the \textbf{S} and \textbf{P} properties by verifying the final accumulation of ``valid'' votes on the $\PBB$, the final NIZKP was computed over witness with tally commitment  $c_{\perp} = \prod_{j=0}^{n_t} c_{\perp}^{(j)}$ and the correctness of computation of the final result \frest. We discuss the relevant proof of security and integrity in \Cref{verif-theorem}.

\section{Protocol Analysis} \label{analysis}

We consider the security model as discussed in \Cref{sec-model}. 

\subsection{Computational model}

We formally model our protocol in a general computational framework that we can leverage to analyze security \cite{kusters2010accountability} as well as privacy \cite{kusters2011verifiability} properties, used extensively in literature for verifiability \cite{kryvos,devos,pankova2023relations,kusters2020ordinos} and privacy properties\cite{devos,kryvos,kusters2020ordinos,helios}. 

 We denote $\pi = (\pi_P\mid\mid\pi_{\mathcal{A}})$ for the process where honest protocol processes $\pi_P$  alongside the adversarial processes which can control both the network and a static set of protocol participants. Auditors (anyone) can act as judge \textsf{J}, i.e., run the program $p_{\text{J}}$ of judge on public input, to \textit{verify} the protocol run. 

The goal $\gamma$ of protocol $P$ is  a set of protocol runs for which the election result corresponds to the actual voter choice, where the description of a run includes the description of the protocol, the adversary with which the protocol is run, and the random  coins used by these entities. According to \cite{kusters2010accountability}, a goal $\gamma$ is verifiable by a judge $\text{J}$ in a protocol P if and only if the probability of a judge $\text{J}$ accepting a run $r$ of protocol P in violation of the goal $\gamma$, i.e. $r \notin \gamma$ is negligible in the security parameter \label{informal-gamma}. 

Lastly, a scheduler process $\textsf{S}$ is responsible for playing the role of authority $\EC$ as well as scheduling all involved parties in a run according to defined protocol phases.

\subsubsection{Assumptions for Verifiability} \label{verifiability-assumptions}
\begin{enumerate}
    \item[P1] The public key encryption scheme $\mathcal{E}$ is correct, Pedersen Vector Commitment scheme is correct and computationally binding and all NIZKPs are correct and computationally sound. 
    \item[P2] The scheduler $\textsf{S}$, judge $\text{J}$ and $\PBB$ are honest: $\varphi = \mathsf{hon(S)} \wedge \mathsf{hon(J)}\wedge \mathsf{hon(\PBB)}$.
\end{enumerate}

\begin{theorem}[Verifiability, S1--S5]\label{verif-theorem}
Under the above assumptions, any run of \textsf{\proname} accepted by $\textsf{J}$ satisfies properties \textbf{S1--S5} except with probability negligible in the security parameter~$\ell$.
\end{theorem}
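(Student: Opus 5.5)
We prove \Cref{verif-theorem} by a sequence of game hops, one per property. Under assumption $\varphi$ (honest $\textsf{S}$, $\textsf{J}$, $\PBB$) we fix a run $r$ that $\textsf{J}$ accepts and suppose $r \notin \gamma$. We then show that some violated property among S1--S5 yields either a binding break of $\mathsf{CommVec}$, a soundness break of the NIZKP, or a signature forgery; each of these happens only with probability negligible in $\ell$.

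The concrete judge program $p_{\text{J}}$ checks three things on $\PBB$. First, every finalized $\tilde{c}_i^{(j)}$ carries a threshold signature $\sigma_{\mathcal{T}}$. Second, the result proof $\pi_{\text{res}}$ verifies against
\[
c_{\perp}=\mathsf{DeRand}\Bigl(\prod_{j}\prod_i \tilde{c}_i^{(j)};\tilde{r}_{\perp}\Bigr).
\]
Third, no unresolved $\mathsf{PoIO}$ remains. A union bound over the polynomially many hops then gives the statement.

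\textbf{Properties S1 and S2.} S1 follows directly from $\mathsf{hon}(\PBB)$: the ledger rejects a second finalized $\mathsf{CAST}$ entry per voter ID. Any extra entry would require a $\mathsf{CAST}$ signed under $v_i$'s key, which is a forgery. S2 follows from persistence of $\PBB$ together with binding. Once $\tilde{c}_i^{(j)}$ is finalized, changing the share $\vec{v}_i^{(j)}$ that $\tj$ later aggregates would make the consistency check $\tilde{c}_{\perp}^{(j)} = \mathsf{CommVec}(\vec{v}_{\perp}^{(j)}; r_{\perp}^{(j)}+\tilde{r}_{\perp}^{(j)})$ either fail, producing a $\mathsf{PoIO}$ that $\textsf{J}$ sees, or yield two distinct openings of the same aggregate commitment. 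The latter breaks binding.

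\textbf{Property S3.} S3 is the Audit-or-Cast argument. The statement $\mathcal{S}$ is $\tilde{c}=c\cdot h^{\tilde{r}}$. A cheating tallier must commit to $\tilde{c}_i^{(j)}$ on $\PBB$ before learning whether the voter audits. Because the cast round $k$ is unknown to $\adv$, any strategy that publishes an incorrect re-randomization is caught in an audited round and yields a valid $\mathsf{PoIO}$ (\Cref{PoIO}). That $\mathsf{PoIO}$ is publicly checkable and signed, so an accepting judge implies no valid $\mathsf{PoIO}$ was ever posted. The cast instance is then correct unless the adversary guesses $k$. Making this bound negligible, rather than $1/k$-type detection, is the main obstacle. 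I would handle it by combining two facts. First, aggregate correctness for the cast instance is enforced independently: $\pi_{\text{vote}_i}$ is verified against $c_i=\prod_j c_i^{(j)}$, and $\Tdesign$'s consistency check ties $\tilde{c}_i^{(j)}$ to the opened shares. Second, a substituted commitment passing both checks would have to open to the voter's signed $c_i^{(j)}$ under $\sigma_{v_i}$, which requires a forgery or a binding break. Audits thus serve as deterrence, while the cryptographic reduction gives the negligible bound.

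\textbf{Properties S4 and S5.} Here the chain of checks composes. Homomorphism of $\mathsf{CommVec}$ gives
\[
\prod_{i,j}\tilde{c}_i^{(j)}=\mathsf{CommVec}\Bigl(\sum_{i,j}\vec{v}_i^{(j)};\ \sum_{i,j}(r_i^{(j)}+\tilde{r}_i^{(j)})\Bigr).
\]
After $\mathsf{DeRand}$ with $\tilde{r}_{\perp}$, $c_{\perp}$ commits to $\sum_i \vi$, the true tally of valid cast votes. Validity of each $\vi\in C$ follows from soundness of $\pi_{\text{vote}_i}$, using voter independence to exclude joint proofs. Knowledge soundness of $\pi_{\text{res}}$ then extracts a $\mathbb{T}'$ opening $c_{\perp}$ with $f_{\text{res}}(\mathbb{T}')$ equal to the published result. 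Binding forces $\mathbb{T}'=\mathbb{T}$, so the published $f_{\text{res}}(\mathbb{T})$ is correct. This holds using only public data, which gives S4 and S5. Summing the negligible losses from every hop completes the proof.
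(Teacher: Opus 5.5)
Your S1, S2 and S5 steps use the same ingredients as the paper (ledger enforcement and persistence of \PBB, binding and homomorphism of $\mathsf{CommVec}$, soundness of $\pi_{\text{res}}$). The step you use to make S3 negligible, however, fails. You assert that a tallier who publishes a wrong $\tilde c_i^{(j)}$ for the cast instance can pass the checks only by opening to the voter-signed $c_i^{(j)}$. In fact no check ever touches the link $\tilde c_i^{(j)}=c_i^{(j)}\cdot h^{\tilde r_i^{(j)}}$ once the voter casts. $\pi_{\text{vote}_i}$ is verified against the un-rerandomized $c_i=\prod_j c_i^{(j)}$, which the talliers exchange privately. $\Tdesign$ only compares $\prod_i \tilde c_i^{(j)}$ with the tallier's \emph{self-reported} aggregates $(\vec v_\perp^{(j)}, r_\perp^{(j)}, \tilde r_\perp^{(j)})$.

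Concretely, a corrupt $\tj$ can publish $\tilde c_i^{(j)} = c_i^{(j)}\cdot\prod_k g_k^{\delta[k]}\cdot h^{\tilde r}$. It then forwards the genuine $c_i^{(j)}$ during synchronization, so $\pi_{\text{vote}_i}$ verifies and $\sigma_{\mathcal T}$ is issued. Later it reports $\vec v_\perp^{(j)}+\vec\delta$ to $\Tdesign$. That is a single consistent opening of $\tilde c_\perp^{(j)}$, so there is no binding break, no forgery and no $\mathsf{PoIO}$, and $\pi_{\text{res}}$ is then a valid proof for the shifted tally. The signature $\sigma_{v_i}(c_i^{(j)})$ enters only in $\mathsf{PoIO}$ after an audit and in $\mathsf{PoIO}_{\text{NIZKP}}$, never on the cast path. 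This is forced by the design. $\tilde r_i^{(j)}$ of a cast ballot is never revealed because that link is exactly what the receipt-freeness argument hides. Any public check of it would hand a coercer holding $c_i^{(j)}$ the same test.

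So the obstacle you identified is real, and no reduction removes it. With the voter's uniform audit/cast bit, a tallier that cheats only on the first submission goes undetected with probability $1/2$. The paper's S3 argument does not attempt a cryptographic reduction at all. It rests solely on the cut-and-choose audit (detection probability $1/2$ per audit, stated as $2^{-k}$ over $k$ rounds), so S3 there is a bound in the audit parameter, not in $\ell$. Your proof should carry the audit-detection error explicitly instead of claiming it away.

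The same gap propagates into your S4 step. Homomorphism and binding show only that $c_\perp$ commits to the sum of whatever the published $\tilde c_i^{(j)}$ encode, not to $\sum_i \vi$. Equating the two is precisely the S3 guarantee, so the ``true tally'' reading of S4/S5 inherits the audit error. Likewise, validity of each $\vi\in C$ via $\pi_{\text{vote}_i}$ is a statement about $c_i$, not about the tallied $\tilde c_i^{(j)}$. \textsf{J} cannot recheck it in any case, because $c_i^{(j)}$ is never posted on \PBB.
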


Verifiability follows from the properties of the underlying cryptographic primitives and the \PBB's enforcement rules.

\subsubsection{Security Proofs}
\begin{itemize}
    \item[S1] \textbf{Double-Vote Inhibition} is handled by $\PBB$ over blockchains through the derived property of persistence (\Cref{sec:pbb}).
    \item[S2] \textbf{Vote Immutability}: after the Voting phase ends, all vote-share commitments $c_i^{(j)}\ \forall j\in[n_t]$ are fixed on $\PBB$; immutability follows from the \emph{persistence} property of bulletin boards~\cite{bulletinboards} and the computational binding of Pedersen commitments~\cite{pedersoncommitment,pedersonvector}.
    \item[S3] \textbf{Cast-as-Intended} follows from the soundness of the sequential Audit-or-Cast protocol (\Cref{seq-audit-cast}): each honest audit detects a cheating tallier with probability $1/2$ via a Cut-and-Choose~\cite{cut-and-choose} argument, so $k$ independent audit rounds reduce the cheating probability to $2^{-k}$. Correctness of the voter's opening is additionally enforced by the validity tests (\Cref{validity-tests}).
    \item[S4] \textbf{Tally-as-Intended} follows from the computational binding and additive homomorphic properties of $\mathsf{CommVec}$ (\Cref{def:pedersenvector}): the aggregate commitment $c_\perp=\prod_i c_i$ opens to exactly $f_{\text{res}}(\mathbb{T})$ because re-randomization preserves the committed plaintext and the binding property rules out any equivocation.
    \item[S5] \textbf{End-to-End Verifiability}: any verifier can check $f_{\text{res}}(\mathbb{T})$ by verifying the NIZK $\pi_{\text{res}}$ published on $\PBB$; soundness of $\pi_{\text{res}}$ under the above assumptions ensures the declared tally is consistent with the committed votes.
\end{itemize}

\subsection{Privacy Framework}

We refer to the privacy framework as defined in \cite{kusters2011verifiability} for analyzing the \textit{privacy properties} of our protocol. We also refer to \cite{kryvos} for formal definitions of \textit{publicly tally-hiding}. While the privacy framework \cite{kusters2011verifiability} formalizes and measures coercion-resisitance as a part, we introduce our definitions of \textit{receipt-freeness} motivated by \cite{benaloisRF,kiayias2015e2e,a-critique-receipt-freeness}.

The main idea of \Cref{privacy-def} is showing the inability of the adversary to distinguish whether some voter $\mathsf{V}_{\text{obs}}$ who runs honest program voted for $m_0$ or $m_1$.

For a given voting method $(C,f_{\text{res}})$, voter $\vobs$ and $\vec{v}_{\text{obs}} \in C$, we consider runs of the protocol $P$ as $ (\hat{\pi}_{\mathsf{V}_{\text{obs}}}(m)||\pi^*||\pi_{\mathcal{A}})$ where $\hat{\pi}_{\mathsf{V}_{\text{obs}}}(m)$ is the honest process of voter $\vobs$ considering $m$ as their choice, $\pi^*$ is the composition of the processes of rest of the parties $P$ and $\pi_{\mathcal{A}}$ is adversary's process. 

\begin{definition}[Privacy]\label{privacy-def}
Let $P$ be the voting protocol, $\mathsf{V}_{\text{obs}}$ be the voter under observation and $\delta \in [0,1]$. Then, $P$ achieves $\delta$-privacy, if for all messages $m_0,m_1 \in C$ and all adversaries $\pi_{\mathcal{A}}$ the difference $$\mathsf{Pr}[(\hat{\pi}_{\mathsf{V}_{\text{obs}}}(m_0)||\pi^*||\pi_{\mathcal{A}})^{(\ell)} \mapsto 1] - \mathsf{Pr}[(\hat{\pi}_{\mathsf{V}_{\text{obs}}}(m_1)||\pi^*||\pi_{\mathcal{A}})^{(\ell)} \mapsto 1]$$ is $\delta$-bounded as a function of the security parameter $1^\ell$.\footnote{A function $f$ is $\delta$-bounded if $\forall c>0$, there exists $\ell_0$ such that $f(\ell)\leq\delta+\ell^{-c} \ \forall \ell>\ell_0$ \cite{replay-attacks-efficiency}}
\end{definition}

Thus, we aim to keep $\delta$ as small as possible since it specifies the adversary's advantage to ``break'' user privacy.

\subsubsection{Ideal Privacy}
Formal privacy results are formulated w.r.t. the privacy level $\delta^{\text{ideal}}_{(n_v,n_v^h,\mu)}(C,f_{\text{res}})$ of an ideal voting protocol $\mathcal{I}_{\text{voting}}(n_v,n_v^h,C,f_{\text{res}},\mu)$ for voting method $(C,f_{\text{res}})$ \cite[Figure 11]{kryvos}. 
Here, $n_v$ refers to the overall voters, $n_v^h$ refers to the honest voters in the run. Furthermore, we assume the honest voters to choose their votes from a distribution $\mu$ over choice-space $C$.

\subsubsection*{Assumptions for Privacy}\label{privacy-assumptions}

\begin{enumerate}
    \item The PKE scheme $\mathcal{E}$ is IND-CCA2-secure.
    \item The election commission $\EC$, public bulletin board $\PBB$ and $\Tdesign$ are honest.
    \item Communication channels between all parties ($\EC$, voters $v \in \mathcal{V}$, talliers $\tj \in \mathcal{T}$ and $\PBB$) are all \textit{mutually-authenticated}.\footnote{Note that mutual-authentication via mTLS\cite{mtls-rfc} is necessary owing to mutually adversarial nature of parties, esp. voters and talliers.}
    \item The messages from voters $v_i$ to talliers $\tj$ are private. (only required for strict receipt-freeness)
\end{enumerate}

\subsubsection{Publicly Tally-Hiding}
Intuitively, for a voting protocol $P$ to be publicly tally-hiding for some voting method $(C, f_{\text{res}}, \mu)$
\begin{itemize}
    \item \textbf{Public Privacy}: $P$ provides same privacy as the ideal voting protocol $\mathcal{I}_{\text{voting}}$ for voting method $(C, f_{\text{res}})$, assuming all talliers are honest.
    \item \textbf{Internal Privacy}: $P$ provides the same privacy as the ideal voting protocol $\mathcal{I}_{\text{voting}}$ for voting method $(C, f_{\text{complete}})$, where $f_{\text{complete}}$ returns the \textit{full tally}, assuming $t$-out-of-$n_t$ talliers are dishonest. In other words, talliers learn as much as they would've in non-tally-hiding voting protocols.
\end{itemize}

\begin{definition}[Publicly Tally-Hiding \cite{kryvos}]\label{publicly-tally-hiding-def}
Let $P$ be a voting protocol with a set of talliers $\mathcal{T}$ and $t \leq n_t$. We say that $P$ is $(\delta_P,\delta_i)$-publicly tally-hiding w.r.t $(\mathcal{T},t)$ iff:
\begin{enumerate}
    \item \textbf{Public Privacy $\delta_P$}: If all parties $\tj \in \mathcal{T}$ are honest, then $P$ achieves $\delta_P$-privacy.
    \item \textbf{Internal Privacy $\delta_i$}: If at most $t-1$ parties $\tj \in \mathcal{T}$ are dishonest, then $P$ achieves $\delta_i$-privacy.
\end{enumerate}
\end{definition}

The above theorem means that the public privacy level $\delta_P$ is the ideal one for $(C,f_{\text{res}},\mu)$ and its internal privacy $\delta_i$ is ideal one for $(C,f_{\text{complete}})$ as defined above.
The formal game-based proofs for \textit{internal privacy} and \textit{public privacy} \Cref{publicly-tally-hiding-theorem} are motivated from \cite{kryvos}.

\begin{theorem}[Internal Privacy \cite{kryvos}]\label{def:internal-privacy}
For all $m_0,m_1\in C$, for all programs $\pi^*$ of the remaining parties such that at least $n_v^h$ voters and at least one tallier are honest in $\pi^*$ (excluding $\vobs$), $$\mid\text{Pr}[(\hat{\pi}_{\vobs}(m_0)||\pi_{\adv})\mapsto1] -\text{Pr}[(\hat{\pi}_{\vobs}(m_1)||\pi_{\adv})\mapsto1] \mid$$ is $\delta_{(n_v,n_v^h,\mu)}^{\text{ideal}}(C,f_{\text{complete}})$-bounded as a function of security parameter $\ell$, where $f_{\text{complete}} $ return the complete tally.
\end{theorem}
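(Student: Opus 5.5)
We prove \Cref{def:internal-privacy} with a sequence of hybrid games. The first game is the real run $(\hat{\pi}_{\vobs}(m_b)\|\pi^*\|\pi_{\adv})$, in which the adversary corrupts all talliers except one honest tallier $\mathsf{T}_{h}$. The last game is one in which the adversary's entire view is simulated from the output of the ideal protocol $\mathcal{I}_{\text{voting}}(n_v,n_v^h,C,f_{\text{complete}},\mu)$ alone. If each transition changes the acceptance probability only negligibly, the distinguishing advantage is bounded by the ideal advantage $\delta^{\text{ideal}}_{(n_v,n_v^h,\mu)}(C,f_{\text{complete}})$ plus a negligible term, which is exactly the $\delta$-boundedness we need.

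\textbf{Step 1: ZK hybrids.} We replace every NIZKP produced by an honest party by a simulated proof. These are the ballot proofs $\pi_{\text{vote}_i}$ of $\vobs$ and of the other honest voters, and the result proof $\pi_{\text{res}}$ of the honest $\Tdesign$. Perfect zero-knowledge of \cite{groth16} makes these hybrids identical to the real game.

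\textbf{Step 2: Information-theoretic view of $\vobs$'s shares.} We argue that the corrupted talliers' view of $\vobs$'s ballot depends only on the vote shares $\{\vec{v}_{\vobs}^{(j)}\}_{j\neq h}$. First, because $\vobs$ uses full-threshold additive sharing, these $n_t-1$ shares are uniform and independent of $m_b$. Second, the commitment $c_{\vobs}^{(h)}$ and its re-randomization $\tilde{c}_{\vobs}^{(h)}$ are perfectly hiding Pedersen commitments: $r_{\vobs}^{(h)}$ is uniform, and the honest tallier's $\tilde r$ is uniform and never revealed on a cast ballot. Third, every audited attempt is discarded and regenerated with fresh shares and randomness, so audit openings reveal only blinding factors $\tilde r$, which are independent of the vote. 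Fourth, the opening $(\vec{v}^{(h)},r^{(h)})$ travels only over the private channel to $\mathsf{T}_h$. From these facts, the view can be resampled perfectly given only the final aggregate information.

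\textbf{Step 3: Leakage through the tally.} The only remaining vote-dependent leakage is what $\mathsf{T}_h$ and $\Tdesign$ release. $\mathsf{T}_h$ sends $(\vec{v}_{\perp}^{(h)},r_{\perp}^{(h)},\tilde r_{\perp}^{(h)})$ to $\Tdesign$, which is honest by assumption and communicates over an authenticated channel. Publicly, the board holds $\tilde r_{\perp}$ and $f_{\text{res}}(\mathbb{T})$. Together with the corrupted talliers' own shares, this lets the adversary learn at most $\mathbb{T}$, i.e.\ $f_{\text{complete}}$. We therefore build a simulator that receives $\mathbb{T}$ from the ideal functionality and samples $\vec v_{\perp}^{(h)}=\mathbb{T}-\sum_{j\ne h}\vec v_{\perp}^{(j)}$. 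Dishonest-voter ballots are handled by extracting their shares through knowledge soundness of $\pi_{\text{vote}_i}$ and feeding them to the ideal functionality. The voter-independence assumption guarantees these extractions are independent.

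\textbf{Main obstacle.} We expect the hardest step to be handling adversarial deviations by corrupted talliers. These include publishing inconsistent $\tilde c$, forcing PoIO/$\mathsf{PoIO}_{\text{NIZKP}}$ events, and inducing $\vobs$ to restart. In $\mathsf{PoIO}_{\text{NIZKP}}$, $\vobs$'s signed commitments $c_{\vobs}^{(j)}$ are published, so we must check that this still leaks nothing; perfect hiding gives this, but only if honest ballots never trigger the NIZKP blame path. We would first show, using completeness and binding, that an honest $\vobs$ never triggers a PoIO except with negligible probability. We would then argue that restarts are simulatable because each restart draws fresh independent randomness, so the number of audits reveals nothing about $m_b$. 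If binding is needed there, it contributes the negligible term.
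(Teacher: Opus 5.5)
Your Step 3 fails as written, because the simulator pins the counted tally to the wrong value. It sets $\vec v_\perp^{(h)}=\mathbb{T}-\sum_{j\neq h}\vec v_\perp^{(j)}$, with $\mathbb{T}$ the ideal tally of the honest votes plus the dishonest votes extracted from the $\pi_{\text{vote}_i}$, and so always publishes $f_{\text{res}}(\mathbb{T})$. In \proname{} the counted tally is not determined by the voters' NIZKs:

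\begin{enumerate}
\item $\pi_{\text{vote}_a}$ is checked only against the commitments $c_a^{(j)}$ exchanged during synchronization.
\item What is counted is the opening of the \PBB{} columns $\prod_i\tilde c_i^{(j)}$.
\item Nobody links $\tilde c_a^{(j)}$ to $c_a^{(j)}$ except a voter who audits.
\end{enumerate}

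So a corrupted $\tj$ colluding with a corrupted voter $a$ can post, as $\tilde c_a^{(j)}$, a commitment to $\vec v_a^{(j)}+\Delta$ whose opening it knows. It reports the genuine $c_a^{(j)}$ so the proof verifies, and it still passes $\Tdesign$'s consistency check. The real run then publishes $f_{\text{res}}(\mathbb{T}+\Delta)$. A $\Delta$ giving one candidate an unassailable lead distinguishes real from simulated runs with non-negligible advantage.

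Extraction is also unjustified at that point. After Step 1 the adversary holds simulated Groth16 proofs, so you would need simulation-extractability, not the knowledge soundness the paper assumes. Voter independence does not supply it.

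Drop extraction altogether. Your simulator plays the honest $\mathsf{T}_h$ and the honest $\Tdesign$, so it sees the corrupted voters' plaintext openings to $\mathsf{T}_h$ and the corrupted talliers' aggregates. Together with the honest shares it generated for $j\neq h$, these give the adversary's exact contribution to the counted tally. From the ideal protocol you then need only the sum of the honest votes. $f_{\text{complete}}$ yields this once you subtract the dummy dishonest choices you submitted. This extra leakage of $f_{\text{complete}}$ over $f_{\text{res}}$ is what lets you absorb an arbitrary adversary-known offset.

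Your blame-path argument also targets the wrong event. It is false that an honest $\vobs$ triggers a PoIO only with negligible probability. A corrupted tallier can force $\mathsf{PoIO}_{\text{NIZKP}}$ on any ballot by misreporting that ballot's commitment during synchronization. Completeness and signature unforgeability only ensure the honest voter is not the one blamed. The published commitments are harmless by perfect hiding either way. What matters is that the honest tallier then resets that ballot's share to $\vec 0$. Done to honest voters other than $\vobs$, this shrinks the counted honest set below $n_v^h$. Neither your simulator nor $\delta^{\text{ideal}}_{(n_v,n_v^h,\mu)}$ accounts for that, so you need an argument from the eviction/fallback rule that such ballots are still counted.

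In Step 2, rest hiding on $r_{\vobs}^{(h)}$ alone. The corrupted talliers receive $c_{\vobs}^{(h)}$ itself during synchronization, so the honest $\tilde r$ adds nothing against them.

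For comparison, the paper builds no simulator. It inherits the Kryvos game sequence and only replaces Game~1: the honest tallier aborts when decryption over the authenticated channel, or the voter's signature on $c_i^{(1)}$, fails. There the channels are realised with the IND-CCA2 scheme $\mathcal{E}$. Your "perfect resampling" idealises the private channels, so under that instantiation it would need this authenticity game followed by an IND-CCA2 hop.
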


\begin{theorem}[Public Privacy \cite{kryvos}]\label{public-privacy-thm}
For all $m_0,m_1\in C$, for all programs $\pi^*$ of the remaining parties such that at least $n_v^h$ voters and all talliers are honest in $\pi^*$ (excluding $\vobs$), $$\mid\text{Pr}[(\hat{\pi}_{\vobs}(m_0)||\pi_{\adv})\mapsto1] -\text{Pr}[(\hat{\pi}_{\vobs}(m_1)||\pi_{\adv})\mapsto1] \mid$$ is $\delta_{(n_v,n_v^h,\mu)}^{\text{ideal}}(C,f_{\text{res}})$-bounded as a function of security parameter $\ell$, where $f_{\text{res}} $ is the actual (tally-hiding) result function.
\end{theorem}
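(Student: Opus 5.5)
The plan is a game-hopping (hybrid) argument in the style of \cite{kryvos}. We start from the real run $(\hat{\pi}_{\vobs}(m_b)\,||\,\pi^*\,||\,\pi_{\adv})$ in which all talliers, $\PBB$, $\EC$ and $\Tdesign$ are honest. We end in a run of the ideal protocol $\mathcal{I}_{\text{voting}}(n_v,n_v^h,C,f_{\text{res}},\mu)$. Each hop must change the adversary's acceptance probability by at most a negligible amount, or by zero. Since $\delta^{\text{ideal}}_{(n_v,n_v^h,\mu)}(C,f_{\text{res}})$ is by definition the best distinguishing advantage in the ideal protocol, the real advantage is then $\delta^{\text{ideal}}$-bounded.

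\textbf{Step 1: what the adversary sees.} With every tallier honest, the public view of $\adv$ consists of the following.
\begin{itemize}
\item The re-randomized commitments $\tilde{c}_i^{(j)}$ on $\PBB$, including those of discarded audited attempts.
\item The threshold signatures $\sigma_{\mathcal{T}}$.
\item The published tuple $(f_{\text{res}}(\mathbb{T}), \tilde{r}_{\perp}, \pi_{\text{res}})$.
\item Anything sent to corrupted voters, such as blinding factors $\tilde{r}_i^{(j)}$ revealed in audits those voters requested.
\end{itemize}
The openings $(\vec{v}_i^{(j)}, r_i^{(j)})$ and the commitments $c_i^{(j)}$ travel over private, mutually authenticated channels (privacy assumptions 3--4). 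They therefore never reach $\adv$.

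\textbf{Step 2: hybrids.}
\begin{itemize}
\item \emph{Hybrid 1} replaces $\pi_{\text{res}}$, and any $\pi_{\text{vote}_i}$ that becomes visible, by simulated proofs. By perfect zero-knowledge of \cite{groth16} this hop costs nothing.
\item \emph{Hybrid 2} replaces every $\tilde{c}_{\vobs}^{(j)}$ of the observed voter by a uniformly random group element. The honest tallier $\tj$ samples $\tilde{r}_{\vobs}^{(j)}$ uniformly and never reveals it to $\adv$; the honest $\vobs$ receives it only in its own audits. Hence $\tilde{c}^{(j)}_{\vobs}=h^{r+\tilde r}\prod_k g_k^{v[k]}$ is uniform in $\mathbb{G}$ independently of the share, by perfect hiding of $\mathsf{CommVec}$. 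This hop is also exact.
\item The published $\tilde{r}_{\perp}$ is a sum that includes these independent uniform blinding terms from honest talliers. It is therefore uniform and independent of individual votes, except through what $c_{\perp}$ determines. The value of $c_{\perp}$ is itself a function of the product of commitments, and those are now uniform.
\end{itemize}

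\textbf{Step 3: reduction to the ideal functionality.} After these hops, the only remaining dependence on $m_b$ is through $f_{\text{res}}(\mathbb{T})$, which is output by honest $\Tdesign$. The threshold signatures depend only on validity, and an honest voter's ballot is always valid. We then build a simulator that interacts with $\mathcal{I}_{\text{voting}}$:
\begin{itemize}
\item it submits the corrupted voters' choices, extracted via knowledge soundness of $\pi_{\text{vote}_i}$, which makes this hop computational and contributes a negligible term;
\item it receives $f_{\text{res}}(\mathbb{T})$;
\item it produces the rest of the transcript with random group elements and simulated proofs.
\end{itemize}
The simulated view is distributed identically, up to negligible error, to Hybrid 2. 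The advantage therefore matches the ideal one.

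\textbf{Main obstacle.} I expect the hardest part to be the extraction in Step 3 together with the audit and restart behaviour. Corrupted voters may submit malformed shares, which are flagged and zeroed. They may also audit adaptively, which reveals $\tilde r$ values for their own ballots only. I would argue that because re-randomizers are sampled fresh per voter and per attempt, the values revealed to corrupted voters are independent of $\vobs$'s blinding values. I would also argue that discarded attempts of $\vobs$ are fresh uniform elements, so their number leaks nothing. I would treat invalid ballots by having the simulator submit an abstention to the ideal functionality.
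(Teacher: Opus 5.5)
Your hybrid argument is sound in outline, but it is organized differently from the paper's proof. The paper does not rebuild the game sequence. It takes the Kryvos public-privacy proof as given and replaces only its Game~1. In that game the honest tallier aborts whenever decryption on the authenticated channel from an honest voter, or verification of that voter's signature on $c_i^{(1)}$, fails. This hop is perfect by correctness of $\mathcal{E}$ and of the signature scheme, and every remaining game is inherited from Kryvos.

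You instead idealize the channels up front in Step~1, which absorbs that authenticity hop (and the IND-CCA2 hop a PKE-based channel would need). You then supply \proname-specific hops: perfect zero-knowledge of Groth16, perfect hiding of $\mathsf{CommVec}$, and an explicit simulator against $\mathcal{I}_{\text{voting}}$. Your route is self-contained and handles the transcript \proname{} actually produces: re-randomized commitments, discarded audit attempts, and the published $\tilde r_\perp$. Kryvos's games were not written for that transcript, and your version also shows the bound is essentially information-theoretic. The paper's route is shorter, and it makes explicit, as a game rather than an assumption, that a corrupted voter cannot inject or replay ciphertexts in an honest voter's name.

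There are two places to tighten. First, in Hybrid~2 you attribute the uniformity of $\tilde c^{(j)}_{\vobs}$ to $\tilde r^{(j)}_{\vobs}$ being unrevealed. You then use the same $\tilde r$'s again to make $\tilde r_\perp$ uniform. But $\tilde r_\perp$ publishes their sum, and once it is known, $c_\perp = h^{\sum_{i,j} r_i^{(j)}}\prod_k g_k^{\mathbb{T}[k]}$ is a commitment to the \emph{full} tally whose hiding owes nothing to the talliers' blinding.

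The hop is exact for a different reason. The voter's own $r^{(j)}_{\vobs}$, which never leaves honest parties when all talliers are honest, masks $\tilde c^{(j)}_{\vobs}$. The independent $\tilde r^{(j)}_{\vobs}$ masks $\tilde r_\perp$. Hence $\tilde c^{(j)}_{\vobs}$ and $\tilde r_\perp$ are jointly uniform and independent.

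Second, knowledge-soundness extraction in Step~3 is unnecessary. Since all talliers are honest, your simulator plays them and reads the corrupted voters' openings directly. You need binding and soundness of $\pi_{\text{vote}_i}$ only to guarantee that the opened sum lies in $C$, submitting an abstention otherwise.
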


\begin{proof}[Proof Sketches, \Cref{def:internal-privacy} and  \Cref{public-privacy-thm}]
We replace \cite[Appendix I.2, Game 1]{kryvos} and \cite[Appendix I.3, Game 1]{kryvos} keeping the proof strategy similar for our protocol.

\begin{itemize}
        \item[\textbf{Game 1}] In $\hat{\pi}_{\mathsf{H}}^{(1)}(m)$, we modify the tallier $\mathsf{T}_1$ for \Cref{def:internal-privacy}/$\mathsf{T}$ for \Cref{public-privacy-thm} (assumed honest) to abort if either message decryption over the authenticated channel between any arbitrary honest voter $v_i$ and tallier or the digital signature of the voter-share commitment $c_i^{(1)}$ received fails.  
    Due to the correctness of the encryption scheme $\mathcal{E}$ and the digital signature scheme, Games 0 and 1 are perfectly indistinguishable. Since tallier is honest, it opens the re-randomization of the received ciphertext $c_i^{(1)}$ after $\audit$ phase correctly on $\PBB$.
\end{itemize}

\end{proof}

\begin{theorem}[Publicly Tally-Hiding \cite{kryvos}]\label{publicly-tally-hiding-theorem}
    Let $\mathcal{T}$ as defined in \Cref{tab:terminology} and $t=n_t$. Then, assuming \Cref{privacy-assumptions} hold, the voting protocol $P_{\mathsf{\proname}}(n_v,n_t,C,f_{\text{res}},\mu)$ is $(\delta^{\text{ideal}}_{(n_v,n_v^h,\mu)}(C,f_{\text{res}}),\delta^{\text{ideal}}_{(n_v,n_v^h,\mu) }(C,f_{\text{complete}}))$-publicly tally-hiding w.r.t $(\mathcal{T},n_t)$.
\end{theorem}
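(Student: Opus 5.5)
The theorem follows from Definition~\ref{publicly-tally-hiding-def} once both of its clauses are checked with $t=n_t$. The public-privacy clause (all talliers honest) is exactly \Cref{public-privacy-thm}, and it gives the level $\delta^{\text{ideal}}_{(n_v,n_v^h,\mu)}(C,f_{\text{res}})$. The internal-privacy clause (at most $n_t-1$ talliers dishonest, so at least one is honest) is exactly \Cref{def:internal-privacy}, and it gives $\delta^{\text{ideal}}_{(n_v,n_v^h,\mu)}(C,f_{\text{complete}})$. The theorem is therefore a conjunction of these two results. The real work is confirming that their hypotheses are met by $P_{\mathsf{\proname}}$ under the assumptions in \Cref{privacy-assumptions}, and that the game hops in their proof sketch can be completed.

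For internal privacy, I would run a game sequence starting from the real run $(\hat{\pi}_{\vobs}(m_b)\|\pi^*\|\pi_{\adv})$.

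\textbf{Game~1.} Make the honest tallier $\mathsf{T}_1$ abort on decryption or signature failures. This hop is perfect, by correctness of $\mathcal{E}$ and of the signature scheme.

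\textbf{Game~2.} Replace $\vobs$'s share $\vec{v}_{\text{obs}}^{(1)}$ and its opening $r_{\text{obs}}^{(1)}$, sent to $\mathsf{T}_1$ over the private channel, with encryptions of zero. The view is preserved by IND-CCA2 security of $\mathcal{E}$, with $\mathsf{T}_1$ still using the real values internally.

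\textbf{Game~3.} Resample $\vobs$'s shares. Because sharing is full-threshold additive modulo $q$, the $n_t-1$ shares held by corrupted talliers are uniformly distributed and independent of $\vec{v}_{\text{obs}}$. Every commitment $c^{(j)}$ and every re-randomized commitment $\tilde c^{(j)}$ on $\PBB$ is a perfectly hiding Pedersen commitment, so it is independent of the committed vector. The NIZKP $\pi_{\text{vote}}$ is perfectly zero-knowledge and can be simulated.

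After Game~3, the only remaining dependence on $m_b$ is through $\mathsf{T}_1$'s aggregate $(\vec{v}^{(1)}_\perp, r^{(1)}_\perp, \tilde r^{(1)}_\perp)$, which is sent to the honest $\Tdesign$, and through what corrupted talliers can reconstruct in combination with it. At most this amounts to the complete tally. The adversary's advantage is then bounded by the ideal protocol for $f_{\text{complete}}$, and the resulting bound is the $\delta^{\text{ideal}}(C,f_{\text{complete}})$ term.

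For public privacy, the same hops apply with every tallier honest, so all channel contents can be replaced. What remains public is the $\PBB$ transcript, consisting of the perfectly hiding commitments, $\tilde r_\perp$, the signatures, and $f_{\text{res}}(\mathbb{T})$ together with $\pi_{\text{res}}$. By perfect zero-knowledge, $\pi_{\text{res}}$ can be simulated from $f_{\text{res}}(\mathbb{T})$ and $c_\perp$ alone. The transcript is therefore simulatable from $f_{\text{res}}(\mathbb{T})$, which yields the $\delta^{\text{ideal}}(C,f_{\text{res}})$ bound. I would then combine the two clauses to conclude the theorem.

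I expect the main obstacle to be the Audit-or-Cast interaction. An audit reveals the $\tilde r$ values to the voter, and the adversary chooses the restart pattern adaptively. The risk is that aborted attempts or the publication of $\tilde r_\perp$ leak extra information beyond the tally: for example, a PoIO could expose $c^{(j)}$ together with its signature, letting it be linked to the honest voter's share. My first approach would be to argue that audited attempts use fresh, independent shares and randomness, so each one is simulatable in the same way as a cast attempt. I would also argue that a PoIO can only be triggered by a dishonest tallier and reveals only a perfectly hiding commitment, so it leaks nothing about $\vec{v}_{\text{obs}}$.
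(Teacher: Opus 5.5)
Your proposal is correct and follows the paper's route: the paper also obtains the theorem by instantiating Definition~\ref{publicly-tally-hiding-def} with $t=n_t$ and invoking \Cref{public-privacy-thm} and \Cref{def:internal-privacy}. Each of those is argued by a Kryvos-style game sequence whose first hop is exactly your Game~1 (the honest tallier aborting on decryption or signature failure), and your further hops (IND-CCA2 replacement of channel contents, share resampling via full-threshold additive sharing and perfect hiding, and NIZK simulation) spell out what the paper leaves to ``keeping the proof strategy similar'' to Kryvos.
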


\subsubsection{Receipt-Freeness}

Our motivation for receipt-freeness comes directly from \cite{benaloh-challenge,goos_receipt-free_1998}.

\begin{definition}[Receipt-Freeness]\label{receipt-freeness-protocol-defn}
    We say that a protocol $P$ is \textbf{receipt-free} if there exists a simulator $\mathcal{S}$ such that for any voter $v$ and any adversary $\mathcal{A}$ (who corrupts $v$ after the voting phase and demands all secrets): $$\mathsf{View}_{\adv}^{\text{Real}}(v,r,\PBB) \approx \mathsf{View}_{\adv}^{\text{Simulated}}(v',r',\PBB)$$ where in \textbf{Real view}, $\adv$ sees the $\PBB$ (containing $\tilde{c}_v$ and all voter's secrets $(v,r)$ while in \textbf{Simulated view}, $\adv$ sees $\PBB$ but voter claims they voted for $v'\neq v$ with secrets $(v',r')$ (amongst all other secrets).
\end{definition}

\begin{theorem}[Receipt-Freeness for $\mathsf{\proname}$]\label{receipt-freeness-protocol-thm}
The $\mathsf{\proname}$ protocol is \textbf{receipt-free} (\Cref{receipt-freeness-protocol-defn}) against an adversary $\mathcal{A}$ who controls the network (except secure channels to talliers) and can coerce the voter to reveal their private state $(v, r)$, assuming at least one tallier is honest.
\end{theorem}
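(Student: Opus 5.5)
The plan is to construct the simulator $\mathcal{S}$ of \Cref{receipt-freeness-protocol-defn} explicitly, using two facts: Pedersen vector commitments are perfectly hiding, and trapdoor openings exist. The key observation is that the only public trace of voter $v$ on $\PBB$ is the tuple of re-randomized commitments $\{\tilde{c}_v^{(j)}\}_{j=1}^{n_t}$. Everything else linking $v$ to its choice (the shares $\vec{v}^{(j)}$, the randomness $r^{(j)}$, the blinding factors $\tilde{r}^{(j)}$) lives only in the voter's memory or in the talliers' memory. Audited attempts are discarded and carry only ballots that were not cast, so they never constrain the cast ballot.

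\textbf{Step 1.} Fix an honest tallier $\mathsf{T}_{h}$. Its blinding factor $\tilde{r}_v^{(h)}$ is uniform in $\mathbb{F}_q$ and is never sent to $v$ for the cast attempt, since the $\mathsf{AUDIT}$ branch is the only one revealing $\tilde{r}$. Hence, conditioned on the coercer's entire view, $\tilde{c}_v^{(h)} = \mathsf{CommVec}(\vec{v}^{(h)}; r^{(h)} + \tilde{r}_v^{(h)})$ is a uniformly random group element. This holds independently of $(\vec{v}^{(h)}, r^{(h)})$, which is the information-theoretic core of the argument.

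\textbf{Step 2.} Build the simulator. On input the fake choice $v' \neq v$, $\mathcal{S}$ samples shares $\{\vec{v}'^{(j)}\}$ summing to $v'$, under the constraint $\vec{v}'^{(j)} = \vec{v}^{(j)}$ for every $j \neq h$. This is possible because the shares are full-threshold additive, so $\vec{v}'^{(h)} = \vec{v}^{(h)} + (v' - v)$. It keeps $r'^{(j)} = r^{(j)}$ for $j \neq h$ and samples $r'^{(h)}$ uniformly. The claimed state $(v', \{\vec{v}'^{(j)}, r'^{(j)}\})$ is then distributed exactly like an honest voter's state for $v'$. The public board is unchanged, and by Step 1 it remains consistent with this state, because some $\tilde{r}'$ explains $\tilde{c}_v^{(h)}$ as a re-randomization of $\mathsf{CommVec}(\vec{v}'^{(h)}; r'^{(h)})$. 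Since that $\tilde{r}'$ is hidden by $\mathsf{T}_h$, the coercer can never test it.

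\textbf{Step 3.} Handle the remaining transcript components, which I expect to be the main obstacle. Two things could act as receipts. The first is the voter's signed commitments $c_v^{(j)}$ and the proof $\pi_{\text{vote}}$ held by the talliers. Under the privacy assumptions these travel only over private channels, so a coercer who controls the network but not $\mathsf{T}_h$ never sees $c_v^{(h)}$ or $\sigma_v(c_v^{(h)})$. The second is the voter's locally stored $c_v^{(h)}$, which is a deterministic function of the claimed opening, so the simulator simply recomputes it from $(\vec{v}'^{(h)}, r'^{(h)})$. The subtle point is that $\pi_{\text{vote}}$ must be consistent with the fake openings if the coercer demands it. I would argue this in one of two ways. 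Either $\pi_{\text{vote}}$ is not retained as voter state, or, by perfect zero-knowledge of \cite{groth16}, the simulator regenerates a valid proof for the fake commitment vector, since $v' \in C$ is a valid witness. Finally, I would check that the aggregate tally outputs are unaffected: $\Tdesign$ publishes only $f_{\text{res}}(\mathbb{T})$ and $\tilde{r}_{\perp}$, and these are identical in both worlds up to the ideal leakage already accounted for in \Cref{publicly-tally-hiding-theorem}. Concluding, the real and simulated views are identically distributed given an honest $\mathsf{T}_h$.
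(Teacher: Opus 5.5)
Your Steps 1 and 2 are the paper's argument in a cleaner form. The paper's sketch also rests only on the honest tallier's blinding factor never reaching the voter, so that $\tilde c_v^{(h)}$ is uniform and some $\tilde r'$ explains it for any claimed opening; your share-adjusting simulator makes this explicit. The gap is in Step 3, where you expected trouble. The claim that a coercer who does not control $\mathsf{T}_h$ never sees $c_v^{(h)}$ is contradicted by the protocol itself. In the Synchronization step of the opening phase (Algorithm~\ref{alg:tallier-ops}, ``Exchange commitments $c_i^{(j)}$ with other talliers''), the honest $\mathsf{T}_h$ forwards the raw, un-re-randomized $c_v^{(h)}$ to every other tallier, so that all of them can form $c_v=\prod_j c_v^{(j)}$ and verify $\pi_{\text{vote}}$, which they all hold. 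Private voter-to-tallier channels are beside the point, since it is the honest tallier that hands the value over.

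Your Step 2 freezes the shares for $j\neq h$ precisely because those talliers may be corrupt. So the coercer knows $c_v^{(h)}$ and simply checks $\mathsf{CommVec}(\vec v'^{(h)}; r'^{(h)}) \stackrel{?}{=} c_v^{(h)}$. With your freshly sampled $r'^{(h)}$ this fails except with negligible probability. Any $r'^{(h)}$ that passes for $\vec v'^{(h)}\neq\vec v^{(h)}$ is a second opening of $c_v^{(h)}$, i.e.\ a binding break. The trapdoor you mention at the outset would supply one, but a voter holding it could equivocate all of its commitments, contradicting the computational binding that S2 and S4 rely on. Once $c_v^{(h)}$ is known, the secrecy of $\tilde r_v^{(h)}$ and the uniformity of $\tilde c_v^{(h)}$ no longer matter. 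A corrupt tallier can even place a transferable receipt on the board. By sending an inconsistent share during synchronization it makes aggregate verification fail, and $\mathsf{PoIO}_{\text{NIZKP}}$ then publishes $\langle c_v^{(h)},\sigma_v(c_v^{(h)})\rangle$. Regenerating $\pi_{\text{vote}}$ does not help, because the genuine proof and commitment vector are already with the corrupt talliers.

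The paper's sketch does not close this gap either. It only compares the board entry $\tilde c_i^{(j)}$ with a receipt $c_i^{(j)}$ handed over by the voter, and never considers what the talliers learn during synchronization, so you cannot fill Step 3 by appeal to it. Your argument goes through only under one of two changes. One is an extra hypothesis that no tallier shares its view with the coercer, strengthening ``at least one honest tallier'' to ``no tallier colludes with the coercer''. The other is a protocol change so that neither $c_v^{(h)}$ nor a proof bound to it ever leaves $\mathsf{T}_h$.

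Even then, Step 1's ``conditioned on the coercer's entire view'' needs a caveat. $\Tdesign$ publishes $\tilde r_\perp$, so anyone can compute $c_\perp=\prod_{i,j} c_i^{(j)}$. Corrupt talliers, after removing their own $\tilde r_\perp^{(j)}$, can even compute $\prod_i c_i^{(h)}$. The coerced voter's claimed opening escapes a test against these products only because at least one other uncoerced voter's commitment masks them. You should state that assumption explicitly.
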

\begin{proof}[Proof Sketch]
\begin{enumerate}
    \item The coercer observes the PBB entry $\tilde{c}_i^{(j)}$ and receives the voter's receipt $c_i^{(j)}$.
    \item The coercer tries to verify that $\tilde{c}_i^{(j)}$ corresponds to $c_i^{(j)}$.
    The coercer calculates $c_{calc} = \mathsf{CommVec}(c_i^{(j)}; \tilde{r}_i^{(j)})$. The coercer observes $\tilde{c}_i^{(j)}$ on the $\PBB$. The validity condition is $\tilde{c}_i^{(j)} = c_{calc} \cdot h^{\tilde{r}}$.
    \item The value $\tilde{r}$ is generated by the Tallier. In the Voting phase, $\tilde{r}$ is never sent to the voter assuming at least one honest tallier.
    \item \textbf{Indistinguishability}: Since $\tilde{r}$ is drawn uniformly from $\mathbb{Z}_q$, the term $h^{\tilde{r}}$ is a uniform random element in $\mathbb{G}$. Therefore, $\tilde{c}_i^{(j)}$ is uniformly distributed in $\mathbb{G}$, independent of $c_{calc}$. Thus, for any other fake vote $v'$ and fake randomness $r'$, there exists a theoretical $\tilde{r}'$ such that $\tilde{c}_i^{(j)} = \mathsf{CommVec}(v'; r') \cdot h^{\tilde{r}'}$.
\end{enumerate}
    The coercer does not know $\tilde{r}$, therefore they cannot determine if $\tilde{c}_i^{(j)}$ was derived from the voter's real receipt $(v, r)$ or a fake receipt $(v', r')$. The PBB entry $\tilde{c}_i^{(j)}$ effectively becomes a ``perfectly hiding'' commitment \cite{pedersoncommitment} of the vote relative to the coercer, even if the coercer knows the input randomness.
\end{proof}

\subsubsection{Privacy Proofs}

\begin{itemize}
    \item[P1] \textbf{Public Privacy} follows directly from \Cref{public-privacy-thm}.
    \item[P2] \textbf{Publicly Tally-Hiding} follows directly from \Cref{publicly-tally-hiding-theorem}.
    \item[P3] \textbf{Receipt-Freeness} follows directly from \Cref{receipt-freeness-protocol-thm}.
\end{itemize}

\section{Conclusion and Future Work}

We introduce \textsf{\proname}, a voting protocol that achieves public auditability, tally-hiding, and receipt-freeness within a single cryptographic framework. By combining an Audit-or-Cast mechanism with tallier-side re-randomization, the protocol enforces cast-as-intended without relying on trusted voting devices and prevents the construction of transferable receipts.

Our results demonstrate that explicit audit challenges provide a principled alternative to trusted-client assumptions in electronic elections. Future work includes Batch-NIZKs for lower voter cost, supporting richer voting rules, and evaluating usability at scale.

\bibliographystyle{IEEEtran}
\bibliography{bibliography}

\end{document}